\title{Real Polynomial Gram Matrices Without Real Spectral Factors}
\author{Puoya Tabaghi and Ivan Dokmani\'c}
\affil{Coordinated Science Laboratory, ECE \\ University of Illinois at Urbana-Champaign, IL, USA\\ \{tabaghi2, dokmanic\}@illinois.edu}
\date{}
\begin{document}
%
\maketitle
\begin{abstract}
It is well known that a non-negative definite polynomial matrix (a polynomial Gramian) $G(t)$ can be written as a product of its polynomial spectral factors, $G(t) = X(t)^H X(t)$. In this paper, we give a new algebraic characterization of spectral factors when $G(t)$ is real-valued. The key idea is to construct a representation set that is in bijection with the set of real polynomial Gramians. We use the derived characterization to identify the set of all complex polynomial matrices that generate real-valued Gramians, and we formulate a conjecture that typical rank-deficient real polynomial Gramians have real spectral factors.
\end{abstract}

\section{Introduction}
Spectral factorization was introduced by Wiener \cite{wiener1949extrapolation,wiener1957prediction}, 
and has since found numerous applications in signal processing, communications, optimal control, and network theory \cite{fischer2005sorted, hunt1993polynomial, kailath2000linear, newcomb1966linear}. Most  of the literature addresses spectral factorization of non-negative polynomial matrices in the complex plane \cite{hardin2004matrix, anderson1967algebraic, jevzek1985efficient, callier1985polynomial, sayed2001survey}. An elegant proof of the existence and uniqueness of spectral factorization for full-rank polynomial matrices on the unit circle appears in \cite{ephremidze2009simple} with an extension to the the rank-deficient polynomials in \cite{ephremidze2015rank}. A factorization theorem for full-rank polynomial matrices factorization in the complex plane can also be extended to polynomial matrices on the real line \cite{ephremidze2014elementary}; another elegant proof is given in \cite{van2014straightforward}. Arguments similar to those in \cite{ephremidze2009simple,ephremidze2015rank} can be made for rank-deficient polynomial matrices on the real line. 

In this paper, we find a common structure of real non-negative polynomial matrices on the real line. 
We first show that to every real polynomial Gramian we can uniquely associate a certain block-Toeplitz bilinear system. The parameters of this bilinear system form an alternative representation of the original matrix which makes it possible to characterize the set of real polynomial Gram matrices without real spectral factors. 

The motivation for this work comes from localization of moving points using time-dependent pairwise distance measurements. As shown in \cite{tabaghi2018kinetic}, under a polynomial trajectory model, localization can be cast a semidefinite program followed by a factorization of a polynomial Gram matrix. In order for the resulting trajectories to be realizable in the 3D Euclidean space, the factors (that is, the coordinates) should be real. We conclude the paper by a conjecture that a \textit{typical} real polynomial Gramian has a real spectral factor, provided that is sufficintly rank deficient.

\subsection{Notation}
In this paper, $\R$ (resp. $\C$) denotes the field of real (resp., complex) numbers, $\R^{d \times N} $ (resp. $\C^{d \times N}$) is the ring of real (resp. complex) $d \times N$ matrices, $\mathbb{U}(d)$ is the group of unitary $d \times d$ matrices, and $\R^{d \times N}[t]$ is the ring of $d \times N$ polynomial matrices with real coefficients on the real line, that is, $t \in \R$. If $X(t) = \sum_{p=0}^{P}{A_p t^p} \in \C^{d\times N}[t]$, then $X(t)^H = \sum_{p=0}^{P}{A_p^{H}t^p}$ where $A_p^{H}$ is the Hermitian transpose of $A_p$. 

\section{Existence and uniqueness of spectral factorization} 
\label{sec:existence_and_uniqueness_of_spectral_factorization}

We begin by briefly reviewing several known facts about spectral factorization that we will need. Derivations and detailed discussions about those results can be found in \cite{ephremidze2015rank, ephremidze2014elementary}

\begin{itemize}
    \item Fact 1: Given a rank $d$ non-negative definite polynomial matrix $G(t) \in \R^{N \times N}[t]$, there exists $X(t) \in \C^{d \times N}[t]$ such that $G(t) = X(t)^H X(t)$; we call $X(t)$ a spectral factor of $G(t)$. 
    \item Fact 2: Clearly $U X(t)$ is also a spectral factor for any unitary $U \in \mathbb{U}(d)$.
    \item Fact 3: The set $\{ U X(t) \ : \ U \in \mathbb{U}(d) \}$ contains \emph{all} spectral factors of $G(t)$; there are no other spectral factors that have a different form. In particular, $U$ cannot be a function of $t$. 
    \item Fact 4: For any real Gramian $G(t)$, there exists a unique factorization $X(t) = \sum (R_p + i Q_p)t^p$ such that $Q_0 = 0$ and $R_0$ has orthogonal rows with the first non-zero entry in each row being positive. If $G(t)$ has a real factorization, then it has a unique real factorization with $R_0$ and $Q_0$ as above.
\end{itemize}

\subsection{Differences between polynomial and constant Gram matrices} 

For constant real positive semidefinite matrices, we can always get a real spectral decomposition simply by the eigenvalue decomposition. Namely, given a Gramian $G \in \R^{N \times N}$, there exist matrices matrices $U, V \in \R^{N \times N}$, such that $U U^T = I$, $V$ is diagonal, and
\[
    G = U V U^T.
\]
Of course, there exists an infinitude of complex factorizations---for any unitary $Q \in \C^{N \times N}$, it holds that $G = (UQ) V (UQ)^H$, but the existence of a real one is guaranteed.

The situation is fundamentally different in the polynomial case. While there are still infinitely many unitarily-related factorizations of any polynomial Gramian, there exist \emph{real} polynomial positive-semidefinite matrices $G(t)$ such that none of these factorizations is real. The intention of this paper is to characterize all such matrices by deriving a certain generative model.


\section{Main results} \label{sec:main_result}

Our first result is a characterization of spectral factors of real polynomial Gramians. More precisely, we want to
\begin{align*}
& \text{find}       & &  A_0, \cdots, A_P \in \C^{d \times N} & \\
& \text{such that} & & X(t)^{H}X(t) \in \R^{N \times N}[t] \\
& \text{where} & & X(t)=\sum_{p=0}^{P}{A_p t^p}.
\end{align*}
We start by proving that these spectral factors must satisfy a certain block-Toeplitz system of equations. Throughout the paper we let $\mathcal{G}$ be the set of positive semidefinite polynomial matrices of the form
\begin{equation*}
G(t) = \sum_{k=0}^{K}{B_k t^{k}}, ~ t \in \R,
\end{equation*}
where $B_k \in \R^{N \times N}$, $K=2P$ and $\max_{t \in \R}\rank G(t) = \rank G(0) = d$. The rank condition is a minor carveout that lets us do away with a number of technicalities. In particular, it implies that $\rank B_0 = d$. It is convenient to eliminate the unitary ambiguity by fixing $U$ through an eigendecomposition of $B_0$, as we do next.
\begin{lemma}
\label{lem:unique-orth}
For any $G(t) \in \mathcal{G}$, there is a unique polynomial matrix 	$X(t)$ such that $G(t) = X(t)^H X(t)$, where 
\[
    X(t) = \sum_{p=0}^{P}{A_p t^{p}},
\]
$A_p = R_p + i Q_p$, $R_p, Q_p \in \R^{d \times N}$, $Q_0 = 0$ and $R_0 \in \mathcal{X}_0$. The set $\mathcal{X}_0$ is the set of $d \times N$ matrices with orthogonal rows, such that the first non-zero entry in each row is positive.
\end{lemma}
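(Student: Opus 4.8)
The lemma is the content of Fact~4 restricted to the subclass $\mathcal G$, and the plan is to reconstruct it from the more primitive Facts~1--3 together with the standing rank hypothesis. First I would produce \emph{some} spectral factor: by Fact~1 there is $X_0(t)=\sum_{p=0}^{P}C_p t^{p}\in\C^{d\times N}[t]$ with $G(t)=X_0(t)^H X_0(t)$. Evaluating at $t=0$ gives $C_0^H C_0=B_0$, which by the hypothesis $\rank B_0=d$ is a real symmetric positive semidefinite matrix of rank $d$; in particular $C_0$ has full row rank. Next I would manufacture a \emph{distinguished} real square root of $B_0$: diagonalize $B_0=W\Lambda W^{T}$ with $W\in\R^{N\times d}$ having orthonormal columns and $\Lambda\in\R^{d\times d}$ positive diagonal, set $R_0:=\Lambda^{1/2}W^{T}$, and flip the sign of each row so that its first nonzero entry is positive. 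Then $R_0^{T}R_0=B_0$ while $R_0R_0^{T}=\Lambda$ is diagonal, so $R_0\in\mathcal{X}_0$.

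Now $C_0$ and $R_0$ are two full-row-rank $d\times N$ matrices with $C_0^H C_0=R_0^H R_0=B_0$, i.e.\ two spectral factors of the constant Gramian $B_0$. Applying Fact~3 to $B_0$ yields $U\in\mathbb{U}(d)$ with $UC_0=R_0$. Set $X(t):=UX_0(t)$; by Fact~2 it is again a spectral factor of $G(t)$, and its constant coefficient is $UC_0=R_0$, which is real --- so $Q_0=0$ --- and lies in $\mathcal{X}_0$. That settles existence. For uniqueness, suppose $X(t)$ and $\widetilde X(t)$ both meet the requirements, with constant coefficients $R_0,\widetilde R_0\in\mathcal{X}_0$. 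By Fact~3 we have $\widetilde X(t)=VX(t)$ for some $V\in\mathbb{U}(d)$, so comparing constant terms gives $VR_0=\widetilde R_0$. Since $R_0$ has full row rank, $V$ is the unique matrix sending $R_0$ to $\widetilde R_0$ in this way, and because the explicit candidate $\widetilde R_0R_0^{T}(R_0R_0^{T})^{-1}$ is real, $V$ must be real, hence real orthogonal. It then remains to show that a real orthogonal $V$ with $VR_0\in\mathcal{X}_0$ and $(VR_0)^{T}(VR_0)=R_0^{T}R_0$ is necessarily the identity.

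That last step is where I expect the real work to be, and it is the conceptual content of the lemma: one must check that normalizing only the constant coefficient already eliminates the \emph{entire} unitary freedom of Fact~3. The lever is the identity $B_0 r_j=R_0^{T}R_0 r_j=\|r_j\|^{2}\,r_j$ for each row $r_j^{T}$ of any $R_0\in\mathcal{X}_0$ with $R_0^{T}R_0=B_0$: such rows are forced to be pairwise-orthogonal eigenvectors of $B_0$ whose eigenvalues equal their own squared norms, so they are determined by $B_0$ up to sign and up to a permutation of the rows. The sign is pinned by the ``first nonzero entry positive'' rule. The residual permutation --- present only when $B_0$ has a repeated eigenvalue, equivalently when two rows share a common norm --- must be excluded by an order convention on the rows (for instance, requiring the leading nonzero entries of the rows to occupy strictly increasing columns), after which a short Gram--Schmidt argument inside each repeated eigenspace closes the gap; granting that $\mathcal{X}_0$ is a genuine slice in this sense, $VR_0=R_0$, whence $V=I$ and $\widetilde X=X$.
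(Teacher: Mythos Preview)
Your plan is essentially the paper's: obtain a spectral factor via Fact~1, rotate so the constant term is real and lies in $\mathcal{X}_0$, and then argue via Fact~3 that the residual freedom is a real orthogonal $V$ which the $\mathcal{X}_0$ normalization kills. The paper works with the $d\times d$ matrix $R_0R_0^{T}$ where you use the $N\times N$ matrix $B_0=R_0^{T}R_0$, but the two are equivalent since the nonzero eigenvalues coincide and the rows of $R_0$ furnish the eigenvectors in both pictures.

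You are in fact \emph{more} careful than the paper on the uniqueness side. The paper asserts that any orthogonal $\tilde V$ with $\tilde V R_0(\tilde V R_0)^{T}$ diagonal must be of the form $\diag(s)V^{T}$, $s\in\{-1,1\}^d$, and then fixes $s$ by the sign rule. That assertion is only correct when the eigenvalues of $R_0R_0^{T}$ are distinct; with a repeated eigenvalue there is a full orthogonal block of freedom, not merely sign flips or permutations. You correctly flag this and note that $\mathcal{X}_0$ as literally defined (orthogonal rows, leading nonzero entry positive) is not a transversal slice in that degenerate case, so an auxiliary ordering or Gram--Schmidt convention on rows within each eigenspace is needed. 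That observation is accurate; the paper simply glosses over it. Modulo this shared caveat about repeated eigenvalues, your argument and the paper's coincide.
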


\begin{proof}
From Fact 1, $G(t)$ has a $d \times N$ spectral factor $X(t) = \sum_{p=0}^P A_p t^p$ such that $G(t) = X(t)^H X(t)$. Since $G(t)$ is real, $B_0 = A_0^T A_0$ is real as well, and we can choose $X(t)$ so that $A_0$ is real. This gives $Q_0 = 0$.

If $\tilde{X}(t)$ is also a spectral factor of $G(t)$ then necessarily $\tilde{X}(t) = U X(t)$ for some (constant in $t$) unitary matrix $U$. Since $Q_0 = 0$, all admissible $U$ are real. Let $V \Lambda V^T$ be an eigenvalue decomposition of the real matrix $R_0 R_0^T$ with both the eigenvalues and eigenvectors real. Any matrix that diagonalizes $R_0 R_0^T$ (that is, makes the rows of $R_0$ orthogonal) has the form $\tilde{V} = \diag (s) V^T$ with $s \in \{ -1, 1 \}^d$ since $\tilde{V} R_0 (\tilde{V} R_0)^T = \Lambda$. Asking, in addition to orthogonality, that the first non-zero entry in each row be positive uniquely determines the sign sequence $s$. In other words, among all spectral factors $U X(t)$ of $G$, there is exactly one such that $U R_0 \in \mathcal{X}_0$.
\end{proof}

\Cref{lem:unique-orth} establishes a unique spectral factorization map for real $G(t)$. We can now state our first main result which is proved in \cref{app:1}.

\begin{theorem} \label{thm:poly}
With notation as in \cref{lem:unique-orth}, for any $G(t) \in \mathcal{G}$, there exist unique matrices $W_k \in \R^{d \times d}$, $k \in \{1,\cdots, 2P\}$ and $R_p \in \R^{d \times N}$, $p \in \{ 0, \ldots, P \}$, with $W_k$ symmetric for every $k$ and $R_0 \in \mathcal{X}_0$, such that
\begin{equation} \label{eq:A_WX}
\begin{bmatrix}
    A_{0} \\
    A_{1}  \\
    \vdots \\
    A_{P} 
\end{bmatrix} = 
\begin{bmatrix}
    I               &   0               & \cdots    & 0             & 0 \\
    i W_{1} &   I               & \cdots    & 0             & 0 \\
    i W_{2} &   i W_{1}     & \cdots    & 0             & 0 \\
    \vdots      & \vdots        & \vdots    & \ddots    & 0 \\
    i W_{P} & i W_{P-1} & \cdots    & i W_{1} & I
\end{bmatrix}
\begin{bmatrix}
    R_{0} \\
    R_{1}  \\
    \vdots \\
    R_{P} 
\end{bmatrix},
\end{equation}
and
\begin{equation} \label{eq:WX}
\underbrace{\begin{bmatrix}
    W_{P+1} &   W_{P}           & W_{P-1}       & \cdots    & W_{1} \\
    W_{P+2} &   W_{P+1}         & W_{P}         & \cdots    & W_{2} \\
    W_{P+3} &   W_{P+2}         & W_{P+1}       & \cdots    & W_{3} \\
    \vdots      & \vdots            & \vdots            & \ddots    & \vdots     \\
    W_{2P}  & W_{2P-1}  & W_{2P-2}  & \cdots    & W_{P}
\end{bmatrix}}_{:=W}
\underbrace{\begin{bmatrix}
    R_{0} \\
    R_{1}  \\
    \vdots \\
    R_{P} 
\end{bmatrix}}_{:= R}
= 0
\end{equation} 
and $\sum_p A_p t^p$ is a spectral factor of $G(t)$.

Denote further by $( \{ W_k \}, \{ R_p\})$  a collection of matrices $W_k$ and $R_p$ as above, and define
\[
    \mathcal{H} := \big\{ ( \{ W_k \}, \{ R_p\}) \ : \ WR = 0~\text{and}~R_0 \in \mathcal{X}_0 \big\},
\]
with $W$ and $R$ defined as in \cref{eq:WX}. Then for every $G(t) \in \mathcal{G}$, there exists exactly one $( \{ W_k \}, \{ R_p\}) \in \mathcal{H}$ satisfying \cref{eq:A_WX} and \cref{eq:WX} and the mapping $G(t) \mapsto \phi(G(t)) = ( \{ W_k \}, \{ R_p\})$ is a bijection between $\mathcal{G}$ and $\mathcal{H}$.
\end{theorem}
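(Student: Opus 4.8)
The plan is to start from the unique spectral factor $X(t)=\sum_p A_p t^p$ guaranteed by \cref{lem:unique-orth}, write $A_p = R_p + i Q_p$, and extract the equations that encode the reality of $G(t)=X(t)^H X(t)$. Expanding $X(t)^H X(t) = \sum_{m} B_m t^m$ with $B_m = \sum_{p+q=m} A_p^H A_q$, the matrix $B_m$ is real iff its imaginary part vanishes; writing $A_p = R_p + iQ_p$ and using $A_p^H A_q = (R_p^T R_q + Q_p^T Q_q) + i(R_p^T Q_q - Q_p^T R_q)$, the imaginary part of $B_m$ is $\sum_{p+q=m}(R_p^T Q_q - Q_p^T R_q)$. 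So the reality constraints are exactly
\[
    \sum_{p+q=m}\bigl(R_p^T Q_q - Q_p^T R_q\bigr) = 0, \qquad m = 0, 1, \ldots, 2P,
\]
with $Q_0 = 0$ fixed. The first step is to show this linear system in the $Q_p$'s (given the $R_p$'s) always has a solution and to exhibit it in the triangular form of \cref{eq:A_WX}; the natural guess is $Q = \mathrm{Im}$-part of the product, i.e. $A_p = R_p + i\sum_{j\ge 1} W_j R_{p-j}$ for a suitable symmetric $W_j$, which forces $Q_p = \sum_{j=1}^{p} W_j R_{p-j}$. Substituting this ansatz into the reality constraints, the constraints for $m \le P$ should be satisfied identically once each $W_j$ is symmetric (this is the routine but slightly tedious computation: the telescoping of $R_p^T W_k R_q$ terms against $R_q^T W_k R_p$ cancels because $W_k^T = W_k$), while the constraints for $P+1 \le m \le 2P$ are precisely the rows of the block-Hankel-type system $WR = 0$ in \cref{eq:WX}.

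The second step is existence and uniqueness of the $W_k$ themselves. Existence of \emph{some} real factorization data is not claimed — rather, one shows that \emph{given} the $R_p$ coming from the canonical complex factor, the $Q_p$ are determined and can be written through symmetric $W_k$. Here I would argue as follows. Since $Q_0 = 0$, the $m=0$ and $m=1$ constraints are vacuous/trivial; proceeding by induction on $p$, the constraint at level $m = p$ (for $p \le P$) reads $R_0^T Q_p + (\text{terms in } Q_1,\ldots,Q_{p-1}) - (\text{transpose}) = 0$, and because $R_0 \in \mathcal{X}_0$ has full row rank $d$, the matrix $R_0^T$ has a left inverse; combined with the requirement that the resulting $W_p := (\text{expression})$ be symmetric — which is exactly the content of the constraint, read correctly — this pins down $W_p$ uniquely from $R_0,\ldots,R_{p-1}$, and hence $Q_p$. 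The higher constraints $m = P+1,\ldots,2P$ then become the genuine conditions $WR=0$ that the tuple must satisfy, with no further freedom. I'd also need to define $W_k$ for $k > P$ — these are free symmetric matrices not appearing in \cref{eq:A_WX}, constrained only by \cref{eq:WX}; the statement bundles them into the representation, so the bijection is between $G(t)$ and the \emph{constrained} tuple.

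The final step is to assemble the bijection $\phi: \mathcal{G}\to\mathcal{H}$. Surjectivity: given $(\{W_k\},\{R_p\})\in\mathcal{H}$, define $A_p$ by \cref{eq:A_WX}, set $X(t)=\sum A_p t^p$, $G(t) = X(t)^H X(t)$; the computation above (using $W_k = W_k^T$ and $WR=0$) shows $G(t)$ is real, it is positive semidefinite by construction, its value at $t=0$ is $R_0^T R_0$ which has rank $d$ since $R_0\in\mathcal{X}_0$, and one checks the max-rank condition, so $G(t)\in\mathcal{G}$. Injectivity: if $\phi(G_1)=\phi(G_2)$ then the $A_p$'s agree, so $X_1 = X_2$ and $G_1 = G_2$. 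Well-definedness and the "exactly one tuple" clause follow from \cref{lem:unique-orth} (the $A_p$ are unique, hence so are the $R_p = \mathrm{Re}\,A_p$ and $Q_p = \mathrm{Im}\,A_p$) together with the uniqueness of the $W_k$ established in step two. The main obstacle I anticipate is step two: carefully verifying that the symmetry of $W_k$ is \emph{equivalent} to (not just sufficient for) the reality constraints at levels $m \le P$, so that the counting of free parameters is exact and the map is genuinely a bijection rather than merely a surjection; this requires bookkeeping the cross terms $R_p^T Q_q$ and organizing the induction so that at each level the symmetric part and antisymmetric part of the constraint are handled separately. I would isolate this as a lemma before proving the theorem proper.
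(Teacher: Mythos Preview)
Your plan is essentially the paper's approach: extract the reality constraints $\sum_{p+q=m}(R_p^TQ_q-Q_p^TR_q)=0$, prove by induction on $p\le P$ that $Q_p=\sum_{j=1}^{p}W_jR_{p-j}$ with each $W_j$ symmetric, treat the constraints at $m>P$ separately, and then assemble the bijection via the intermediate set $\mathcal{A}$ of canonical spectral factors. The lemma you anticipate isolating is exactly the paper's key technical tool: for $A$ of full row rank, the general solution of $X^TA-A^TX=C$ is $X=WA+X_0$ with $W$ symmetric and $X_0$ a particular solution. (Your ``left inverse of $R_0^T$'' heuristic is not quite the right formulation; what you actually need is this characterization of the full solution set.)

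One point needs sharpening. You describe the $W_k$ for $k>P$ as ``free symmetric matrices\ldots constrained only by \cref{eq:WX}'' and say the constraints at $m>P$ ``are precisely the rows'' of $WR=0$. But the raw constraint $\mathcal{I}(B_{P+p})=0$ involves only $W_1,\ldots,W_P$ and the $R_i$; no $W_{P+p}$ appears in it. The paper runs a \emph{second} induction on $p\in\{1,\ldots,P\}$: using the first induction and the already-established rows of \cref{eq:WX}, it reduces $\mathcal{I}(B_{P+p})=0$ to the form $Y^TR_0-R_0^TY=0$ with $Y=\sum_{i\ge1}W_{P+p-i}R_i$, and then applies the same key lemma again to \emph{produce} a symmetric $W_{P+p}$ satisfying $W_{P+p}R_0=-Y$. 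Thus the symmetry of $W_{P+p}$ is derived, not postulated, and its uniqueness follows from the full row rank of $R_0$. Your plan goes through once you route the $m>P$ argument through the lemma in this way rather than treating $W_{P+1},\ldots,W_{2P}$ as independent parameters subject to $WR=0$.
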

\begin{remark}
Theorem 3.2 gives us a means to generate complex polynomial matrices with real Gramians, a task that might seem nontrivial a priori. With \cref{thm:poly} in hand, one simply has to produce $2P$ symmetric matrices $W_k$ and a suitable $R$ in the nullspace of the obtained block-Toeplitz matrix $W$. This is always possible since $W$ has at least a $d$-dimensional nullspace.
\end{remark}

We have now identified the special structure of the coefficients of complex polynomial matrices that have real-valued Gramians. It seems intuitive that only particular polynomial matrices have the required property. The fact that there is a bijection between $\mathcal{G}$ and $\mathcal{H}$ allows us to represent any $G(t) \in \mathcal{G}$ in terms of $( \{ W_p \}, \{ R_p\})$.

The above representation is useful because it leads to a simple characterization of the set of real positive semidefinite polynomial matrices in $\mathcal{G}$ without real spectral factors. Denote the set of such polynomial matrices by $\mathcal{G}_C \subset \mathcal{G}$. Let further $\phi : \mathcal{G} \to \mathcal{H}$ be the the above bijective map that associates $( \{ W_p \}, \{ R_p\})$ to every $G \in \mathcal{G}$. We then have the following result:

\begin{lemma} \label{lem:image-set}
Let $\mathcal{G}_{C} \subset \mathcal{G}$ be the set of real polynomial Gramians with no real spectral factor. Then, $\phi(\mathcal{G_{C}}) = \{ (\{ W_k \}, \{ R_p\}) \in \mathcal{H}: W \neq 0 \} $.
\end{lemma}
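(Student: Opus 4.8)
The plan is to exploit the bijectivity of $\phi$ established in \cref{thm:poly}: it suffices to show that for every $G(t) \in \mathcal{G}$, writing $(\{W_k\},\{R_p\}) = \phi(G(t))$, one has $W = 0$ if and only if $G(t)$ admits a real spectral factor (equivalently, $G(t)\notin\mathcal{G}_C$). The description of $\phi(\mathcal{G}_C)$ then follows by taking images under $\phi$.

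The forward implication is immediate: every block $W_k$, $k\in\{1,\dots,2P\}$, appears somewhere in the matrix $W$ of \cref{eq:WX}, so $W=0$ forces $W_k=0$ for all $k$; then \cref{eq:A_WX} collapses to $A_p=R_p$ for every $p$, so $X(t)=\sum_p R_p t^p$ is a real spectral factor of $G(t)$ and hence $G(t)\notin\mathcal{G}_C$.

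For the converse, suppose $G(t)$ has a real spectral factor. By Fact 4 it then has a real spectral factor whose degree-$0$ coefficient is real and lies in $\mathcal{X}_0$; by the uniqueness in \cref{lem:unique-orth}, this real factor must be the canonical $X(t)=\sum_p A_p t^p$ of \cref{thm:poly}, so $Q_p=0$ for every $p\in\{0,\dots,P\}$. Taking imaginary parts in \cref{eq:A_WX} gives $Q_p=\sum_{k=1}^{p} W_k R_{p-k}$, and since $R_0\in\mathcal{X}_0$ has $d$ nonzero orthogonal rows, $R_0 R_0^T$ is invertible. Thus $Q_1 = W_1 R_0 = 0$ forces $W_1 = 0$, and an induction on $p$ (using $W_1=\cdots=W_{p-1}=0$ to reduce $Q_p=0$ to $W_p R_0=0$) yields $W_1=\cdots=W_P=0$. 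Substituting these vanishing blocks into \cref{eq:WX}, the matrix $W$ becomes block lower-triangular with Toeplitz structure and all diagonal blocks equal to $W_{P+1}$ (its last block-column being zero); equating the $i$-th block-row of $WR=0$ to zero and again invoking invertibility of $R_0 R_0^T$, a second induction on $i$ gives $W_{P+1}=\cdots=W_{2P}=0$. Hence $W=0$.

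Combining the two directions shows $G(t)\in\mathcal{G}_C \iff W\neq 0$, and since $\phi$ is a bijection onto $\mathcal{H}$, we get $\phi(\mathcal{G}_C)=\{(\{W_k\},\{R_p\})\in\mathcal{H}: W\neq 0\}$. The main obstacle is the converse: one must observe that a hypothetical real factor is \emph{forced} to coincide with the canonical $X(t)$ by \cref{lem:unique-orth}, and that annihilating the ``upper'' blocks $W_{P+1},\dots,W_{2P}$ — which never appear in the formulas for the $Q_p$ — requires the structural equation \cref{eq:WX} rather than the conditions $Q_p=0$ alone. Both inductions rest on $R_0$ having full row rank, i.e., on the standing hypothesis $\rank B_0 = d$ built into the definition of $\mathcal{G}$.
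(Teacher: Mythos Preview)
Your proof is correct and follows essentially the same route as the paper's: show that $G(t)$ has a real spectral factor iff $W=0$, by first using \cref{eq:A_WX} together with the full row rank of $R_0$ to kill $W_1,\dots,W_P$, and then \cref{eq:WX} to kill $W_{P+1},\dots,W_{2P}$. The paper's proof is simply terser---it states only the implication $G\in\mathcal{G}_R\Rightarrow W=0$ and appeals to contraposition, leaving the easy direction ($W=0\Rightarrow X(t)$ real) and the two inductions implicit---whereas you spell out both directions and both inductive steps explicitly.
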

\begin{proof}
If $G(t) \in \mathcal{G}_{R} = \mathcal{G} \setminus \mathcal{G}_{C}$, then the coefficients of its spectral factor are real valued. Since the first coefficient, $R_0$, is full rank, using first \eqref{eq:A_WX} and then \eqref{eq:WX} gives that $W_k = 0$ for all $k \in \{1, \cdots, 2P\}$. The claim follows by contraposition.
\end{proof}

\section{A Conjecture about Real Gram Matrices}

A useful application of this representation would be to prove that (in some sense) typical rank-deficient real polynomial Gramians have real spectral factors. This idea stems from the intuition that a complex polynomial matrix, $X(t) \in \C^{d \times N}[t]$, with a real Gramian, $X(t)^{H} X(t) \in \R^{N \times N}[t]$, can be characterized by $N^{2}$ independent polynomial equations with $O(d N)$ parameters. This could imply that a large enough generic real polynomial Gramian (that is to say, for a large $N$) has a real spectral factor.

While we could not produce a proof of such a result, we state here a conjecture based on studying the problem ``by hand'' for small values of $N$ and $d$. The general case might involve a study of the solution set of a homogeneous system of bilinear equations \eqref{eq:WX}. Concretely, we would like to know whether there exist choices of $N$ and $d$ such that the size of $\mathcal{G}_R = \mathcal{G} \setminus \mathcal{G}_{C}$ is larger than $\mathcal{G}_{C}$. With the bijection between $\mathcal{G}$ and $\mathcal{H}$ established in \cref{thm:poly}, one route would be via images of $\mathcal{G}_R$ and $\mathcal{G}_C$ through $\phi$. Since it is possible to define a measure on $\mathcal{H}$, we also have the pullback measure (through $\phi$) on $\mathcal{G}$. The goal would then be to show that $\mu(\phi(\mathcal{G}_C)) = 0$ as soon as $N$ is sufficiently large (or $d$ sufficiently small), as stated below.

\begin{conjecture}
If $N > d+1$, the set of $G(t) \in \mathcal{G}$ with no real spectral factor has measure zero in $\mathcal{G}$.
\end{conjecture}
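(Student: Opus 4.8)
The goal is to show that, for $N > d+1$, the "bad" set $\mathcal{G}_C$ of real polynomial Gramians without a real spectral factor is negligible. By \cref{thm:poly} and \cref{lem:image-set}, this is equivalent to showing that $\phi(\mathcal{G}_C) = \{(\{W_k\},\{R_p\}) \in \mathcal{H} : W \neq 0\}$ has measure zero inside $\mathcal{H}$; equivalently, that the set $\{W = 0\}$ has full measure in $\mathcal{H}$. I would first give $\mathcal{H}$ the structure of a (real) algebraic variety: the coordinates are the entries of the symmetric matrices $W_1,\dots,W_{2P}$ and the matrices $R_0,\dots,R_P$, subject to the single matrix equation $WR = 0$ from \cref{eq:WX} (the normalization $R_0 \in \mathcal{X}_0$ is a semialgebraic condition that fixes a full-measure chart and can be ignored for the dimension count). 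The natural measure on $\mathcal{H}$ is then the one coming from its top-dimensional stratum. The subset $\mathcal{H}_0 := \{W = 0\} \subset \mathcal{H}$ is the linear subspace where $R$ is free and all $W_k$ vanish (note $WR=0$ is then automatic), of dimension $dN(P+1)$. So the claim reduces to: \emph{the variety $\mathcal{H}$ has dimension exactly $dN(P+1)$, with $\mathcal{H}_0$ as (the closure of) its unique top-dimensional component.}

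\textbf{Key steps.} First I would parametrize: think of $\mathcal{H}$ as a subvariety of $\mathbb{R}^{M} \times \mathbb{R}^{dN(P+1)}$ where $M = \binom{d+1}{2}\cdot 2P$ counts the free entries of the symmetric $W_k$'s, cut out by the $dN \cdot P$ scalar equations $WR = 0$. Second, I would analyze this via the projection onto the $W$-coordinates. For a fixed generic choice of the $W_k$, the block-Toeplitz matrix $W$ is $dP \times d(P+1)$, and by the remark after \cref{thm:poly} it has nullspace of dimension \emph{at least} $d$ — generically \emph{exactly} $d$ — so the fiber $\{R : WR = 0, R_0 \in \mathcal{X}_0\}$ is (generically) a $d \cdot$ "something" dimensional set. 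The crucial point is the column structure: $R = [R_0; \dots; R_P]$ with each $R_p \in \mathbb{R}^{d \times N}$, and $WR = 0$ is really $N$ independent copies of the equation $W \rho = 0$ for $\rho \in \mathbb{R}^{d(P+1)}$ ranging over the columns. Hence the generic fiber over a fixed $W$ has dimension $dN$ (namely $N$ times the dimension $d$ of $\ker W$ when $W\ne 0$ has maximal rank $dP$). This gives a component $\mathcal{H}_{\text{gen}}$ of dimension $M + dN$. Third, I would compare $M + dN$ against $\dim \mathcal{H}_0 = dN(P+1) = dNP + dN$: since $M = dP(d+1)$ while the alternative contributes $dNP$, we have $M < dNP$ precisely when $d+1 < N$, i.e. exactly the hypothesis $N > d+1$. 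Thus every component of $\mathcal{H}$ on which $W \not\equiv 0$ has dimension at most $M + dN < dN(P+1) = \dim \mathcal{H}_0$, so $\phi(\mathcal{G}_C) = \{W \neq 0\}$ is contained in a finite union of lower-dimensional subvarieties and hence has measure zero. Pulling back through the bijection $\phi$ gives the conjecture.

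\textbf{The main obstacle.} The soft dimension count above is the heart of the matter, but making it rigorous requires care at two points. (i) One must justify that $\mathcal{H}$ really is carved out "transversally enough" that its irreducible components decompose cleanly, and in particular that the bilinear system $WR=0$ does not conspire to produce a hidden high-dimensional component with $W \neq 0$ — i.e. one must rule out that for $W$ in some positive-dimensional \emph{non}-generic locus the nullspace of $W$ jumps in dimension by enough to compensate. Controlling the stratification of $\{(\text{symmetric block-Toeplitz } W) : \dim\ker W = d + j\}$ by $j$, and checking that each stratum's contribution $(\dim\text{stratum}) + N(d+j)$ stays below $dN(P+1)$ when $N > d+1$, is the genuinely delicate combinatorial/algebraic-geometry step. (ii) One must verify that the measure on $\mathcal{G}$ asserted to exist (the pushforward/pullback through $\phi$) is mutually absolutely continuous with the "natural" Lebesgue-type measure on the top stratum of $\mathcal{G}$, so that "measure zero in $\mathcal{H}$" transfers to "measure zero in $\mathcal{G}$" in the intended sense; since $\phi$ is a bijection this is largely bookkeeping, but it does require knowing $\phi$ and $\phi^{-1}$ are at least locally Lipschitz (or smooth) on the relevant strata, which would follow from writing $\phi^{-1}$ explicitly via \cref{eq:A_WX} and $G(t) = X(t)^H X(t)$.
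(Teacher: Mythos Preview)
The paper does not prove this statement: it is explicitly presented as a conjecture, with the authors writing that they ``could not produce a proof of such a result'' and that it is ``based on studying the problem `by hand' for small values of $N$ and $d$.'' There is therefore no proof in the paper to compare your proposal against.

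That said, your plan is exactly the route the paper itself suggests in the paragraph preceding the conjecture: pass through the bijection $\phi$ of \cref{thm:poly}, use \cref{lem:image-set} to identify $\phi(\mathcal{G}_C)$ with $\{W\neq 0\}\subset\mathcal{H}$, and then argue that this set has measure zero. Your dimension heuristic is sound and, pleasingly, recovers the threshold $N>d+1$ on the nose: $\dim\mathcal{H}_0=dN(P+1)$ versus the generic $W\neq 0$ contribution $M+dN=d(d+1)P+dN$, and $d(d+1)P<dNP$ precisely when $N>d+1$.

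You have also correctly located the real gap. Your step (i) --- controlling the strata where $\dim\ker W=d+j$ with $j\geq 1$ --- is not a technicality but the entire difficulty, and it is not resolved by your sketch. Concretely, you would need that the locus of symmetric block--Toeplitz $W$ with $\dim\ker W\geq d+j$ has codimension at least $jN-dP(N-d-1)$ inside the $M$-dimensional parameter space, uniformly in $j$; for small $P$ and moderate $j$ this can fail to be obvious, and the Toeplitz and symmetry constraints on the blocks make standard determinantal-variety codimension formulas inapplicable without further work. A secondary issue you glossed over is that $R_0\in\mathcal{X}_0$ is not an open condition (orthogonality of rows imposes $\binom{d}{2}$ equations), so the dimension of $\mathcal{H}_0$ is $dN(P+1)-\binom{d}{2}$ rather than $dN(P+1)$; the same correction must be tracked through the $W\neq 0$ fibers, where requiring the first block of a kernel element to have full rank and orthogonal rows interacts nontrivially with the kernel structure. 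Until these points are handled, what you have is a heuristic that makes the conjecture plausible --- which is essentially where the paper leaves things --- not a proof.
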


\section{Conclusion}
We derived a generative representation of the set of real non-negative polynomial matrices on the real line, which do not have any real spectral factors. One application of this representation is to generate random rank-deficient real polynomial Gramians by randomly sampling $( \{ W_k \}, \{ R_p\}) \in \mathcal{H}$. This can effectively be done by generating a set of random symmetric matrices, $\{W_k\}$, and constructing the block-Toeplitz matrix $W$. Then, any matrix $R \in \mathcal{N}(W)$, along with \eqref{eq:A_WX} yields the coefficients of a spectral factor of $G(t) = X(t)^H X(t)$ that is a real-valued polynomial Gramian. We conjecture that this representation can be used study the size of the subsets of $\mathcal{G}$, and in particular, that the set of real rank-deficient polynomial Gramians without real spectral factors is in some sense small.

\appendix

\section{Key technical lemma}

The key result we use several times is a particular form of a solution of the matrix equation $X^{T}A - A^{T}X = C$. While solutions to this equation are already known in the literature \cite{braden1998equations}, we give a simpler form which fits our purpose better.
\begin{lemma}\label{lem:main}
Let $R_0$ be such that 
\begin{equation*}
R_0^{T}A - A^{T}R_0 = C
\end{equation*}
where $R_0, A_0 \in \R^{d \times N}, C \in \R^{N \times N}$. Then any solution to 
\begin{equation}\label{eq:affine}
X^{T}A - A^{T}X = C
\end{equation}
can be written as
\begin{equation} \label{eq:affinesol}
X = W A + R_0
\end{equation}
for some symmetric $W \in \R^{d \times d}$.
\end{lemma}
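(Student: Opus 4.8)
The plan is to reduce to the homogeneous equation and then characterize its solution set. Since both $X$ and $R_0$ satisfy the same affine equation $Y^T A - A^T Y = C$, their difference $Z := X - R_0$ satisfies $Z^T A - A^T Z = 0$, i.e., $Z^T A$ is a symmetric $N \times N$ matrix. It therefore suffices to prove that every $Z \in \R^{d \times N}$ with $Z^T A$ symmetric has the form $Z = W A$ for some symmetric $W \in \R^{d \times d}$; combined with $X = Z + R_0$ this gives $X = W A + R_0$. Throughout I will use that $A$ has full row rank $d$ --- this is the relevant case (in the applications $A = A_0$ with $\rank B_0 = \rank (A_0^T A_0) = d$), and a nondegeneracy hypothesis of this kind is genuinely needed, since the conclusion already fails for $A = 0$.

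The first step is to show that $Z$ annihilates $\ker A$. Indeed, for $w \in \ker A$ we have $A^T Z w = Z^T A w = 0$, and since $A$ has full row rank, $A^T$ is injective, so $Z w = 0$. Hence the rows of $Z$ lie in $(\ker A)^\perp$, which equals the row space of $A$; consequently each row of $Z$ is a linear combination of rows of $A$, so $Z = W A$ for some $W \in \R^{d \times d}$. Concretely one may take $W := Z A^T (A A^T)^{-1}$ (the matrix $A A^T$ being invertible), since then $W A = Z A^T (A A^T)^{-1} A$ equals $Z$ composed with the orthogonal projector onto the row space of $A$, which fixes $Z$ by the previous sentence.

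It remains to check that $W$ is symmetric. Substituting $Z = W A$ into $Z^T A = A^T Z$ gives $A^T W^T A = A^T W A$, so $A^T (W - W^T) A = 0$; since $A$ has full row rank, $A^T$ is injective and $A$ is surjective onto $\R^d$, which forces $W = W^T$. This finishes the argument. There is no serious obstacle here: the only mildly delicate point is the symmetry of $W$, and it drops out immediately once $Z$ has been written in the form $W A$; everything else is routine linear algebra. The one thing to be careful about in the write-up is to record the rank hypothesis on $A$ explicitly, since without it the stated characterization is false.
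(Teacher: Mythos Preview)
Your proof is correct and follows essentially the same route as the paper's: reduce to the homogeneous equation $Z^{T}A = A^{T}Z$, show that any such $Z$ has the form $WA$ (you via a kernel argument showing $Z$ annihilates $\ker A$, the paper via an explicit decomposition $Z = WA + W'A^{\perp}$ and then $W'=0$), and conclude $W = W^{T}$ from $A^{T}(W-W^{T})A = 0$ using the full row rank of $A$. Your remark that the rank hypothesis on $A$ should be stated explicitly is apt, since the paper's own proof also relies on it without recording it in the lemma statement.
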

\begin{proof}
Let us assume that \eqref{eq:affine} has a particular solution, $R_0$. One can simply write 
\begin{equation*}
X = W A + W^{'} A^{\perp}
\end{equation*}
where $W \in \R^{d \times d}, W^{'} \in \R^{(N-d)\times (N-d)}$ and the rows of $A^{\perp} \in \R^{(N-d)\times N}$ span the the orthogonal complement of the row space of $A$. If $N=d$, then $W^{'}=0$. The homogeneous solution to \eqref{eq:affine} is
\begin{align*}
0 &= X^{T} A - A^{T} X \\
&=(W A + W^{'} A^{\perp})^{T} A - A^{T}(W A + W^{'} A^{\perp}) \\
&=(W A + W^{'} A^{\perp}-W^{T} A)^{T} A - A^{T} W^{'} A^{\perp} 
\end{align*}
where $N \neq d$. The rows of each term lie in  $A$ and $A^{\perp}$ row spaces which are orthogonal complement of each other. Since $A$ has full row rank, $W^{'} = 0$. The trivially holds for the case where $N=d$. Thus,
\begin{equation*}
A^{T}(W-W^{T})A = 0
\end{equation*}
which leads to $W =W^{T}$. Therefore, \eqref{eq:affinesol} generally solves \eqref{eq:affine} for any symmetric matrix $W$ if \eqref{eq:affine} admits a particular solution, $R_0$. 
\end{proof}

\section{Proof of Theorem \ref{thm:poly}}\label{app:1}
Since $G(t) \succeq 0$ for almost all $t \in \R$ and $\max_{t \in \R} \rank{G(t)} = d$, there exists a unique, up to a constant $d \times d$ left unitary factor, polynomial matrix $X(t) \in \C^{d\times N}[t]$ of degree $P$ such that $G(t) = X(t)^{H} X(t)$.
Letting $X(t) = \sum_{p=0}^{P}{A_p t^{p}} \in \C^{d \times N}[t]$ and $G(t) = \sum_{k=0}^{2P} B_k t^k$, we get
\begin{equation*}
B_k = \sum_{\substack { 0 \leq p \leq P \\ 0 \leq k-p \leq P}}{A_{p}^{H} A_{k-p}} \in \R^{N \times N}.
\end{equation*}
By the definition of $\mathcal{G}$ and \cref{lem:unique-orth}, we know that $G(t)$ has a spectral factor $X(t)$ such that $A_0 = R_0$, $Q_0 = 0$, and $R_0 \in \mathcal{X}_0$ has orthogonal rows hence full row rank.

Let us use induction on $p$ to show that
\begin{equation} \label{eq:indHyp1}
Q_p = \sum_{i=1}^{p}{W_i R_{p-i}}
\end{equation}
holds for some symmetric matrices $W_i \in \R^{d \times d}$, $\forall i, p \in \{1, \cdots, P\}$. We first establish the induction base ($p=1$) by setting the imaginary part of $B_1$ to zero, i.e. $\mathcal{I}(B_1) = 0$, which is equivalent to 
\begin{equation*}
R_0^{T} Q_1-Q_1^{T} R_0  = 0 .
\end{equation*}
Since $\rank{R_0} = d$ ($\rank{G(0)} = d$), \cref{lem:main} yields
\begin{equation*}
Q_1 = W_1 R_0
\end{equation*}
for any symmetric matrix $W_1 \in \R^{d \times d}$. We assume that the induction hypothesis \eqref{eq:indHyp1} holds for $p' < P$, and prove that it also holds for $p = p'+1$. We claim that the general solution to $\mathcal{I}(B_{p}) = 0$, has the form of \eqref{eq:indHyp1}. Setting $\mathcal{I}(B_{p}) = 0$ leads to
\begin{align*}
\mathcal{I}(B_{p}) &\stackrel{(a)}{=} R_0^{T}Q_p - Q_p^{T} R_0 + \sum_{i=1}^{p-1}{ R_i^{T} Q_{p-i}} - \sum_{i=1}^{p-1}{Q_{p-i}^{T} R_i}  \\
&\stackrel{(b)}{=} R_0^{T}Q_p - Q_p^{T} R_0 + \sum_{i=1}^{p-1}{\sum_{j=1}^{p-i}{ R_i^{T}W_j R_{p-i-j}}} - \sum_{i=1}^{p-1}{\sum_{j=1}^{p-i}{R_{p-i-j}^{T}W_j }R_i}  \\
&\stackrel{(c)}{=} R_0^{T}(Q_p- \sum_{i=1}^{p-1}{W_{p-i} R_i}) - (Q_p-\sum_{i=1}^{p-1}{W_{p-i}R_i})^{T} R_0 .
\end{align*}
where $(a)$ follows from 
\begin{equation*}
\mathcal{I}(B_{p}) = \sum_{i=0}^{p-1}{R_i^{T} Q_{p-i}} - \sum_{i=0}^{p-1}{Q_{p-i}^{T} R_i},
\end{equation*}
$(b)$ from induction hypothesis \eqref{eq:indHyp1} and $(c)$ from
\begin{equation*}
\sum_{i=1}^{p-1}{\sum_{j=1}^{p-i}{ R_i^{T}W_j R_{p-i-j}}} = \sum_{i=1}^{p-1}{\sum_{j=1}^{p-i-1}{ R_i^{T}W_{j} R_{p-i-j}}} + (\sum_{i=1}^{p-1}{W_{p-i}R_i})^{T} R_{0}
\end{equation*}
and $\sum_{i=1}^{p-1}{\sum_{j=1}^{p-i-1}{ R_i^{T}W_{j} R_{p-i-j}}}$ being a symmetric matrix. Therefore, the general solution to $\mathcal{I}(B_{p}) = 0$ is
\begin{equation*}
Q_{p} =\sum_{i=1}^{p-1}{ W_i R_{p-i}} + W_{p} R_0
\end{equation*}
for any symmetric matrix $W_{p} \in \R^{d \times d}$ (refer to \cref{lem:main}). This completes the proof of induction for $1< p \leq P$. Consequently,
\begin{equation*}
\begin{bmatrix}
    A_{0} \\
    A_{1}  \\
    \vdots \\
    A_{P} 
\end{bmatrix} =
\begin{bmatrix}
    R_{0} \\
    R_{1}  \\
    \vdots \\
    R_{P} 
\end{bmatrix} +
j\begin{bmatrix}
    0				& 	0 				& \cdots 	& 0 			& 0	\\
    W_{1}		& 	0				& \cdots 	& 0 			& 0	\\
    W_{2}		&  W_{1} 		& \cdots 	& 0 			& 0	\\
    \vdots		& \vdots		& \vdots	& \ddots	& \vdots	\\
    W_{P}		& W_{P-1} 	& \cdots 	& W_{1} 	& 0
\end{bmatrix}
\begin{bmatrix}
    R_{0} \\
    R_{1}  \\
    \vdots \\
    R_{P} 
\end{bmatrix}
\end{equation*}
which concludes the first part of \cref{thm:poly}. The second part is due to  $\mathcal{I}(B_{p+P})= 0$ constraints for $p \in \{1,\cdots, P\}$. Using induction, we prove that the following holds for $p \in \{1,\cdots, P\}$,
\begin{equation}\label{eq:indHyp2}
\sum_{i=0}^{P}{ W_{P+p-i} R_i} = 0,
\end{equation}
where $W_p \in \R^{d \times d}$ and $W_p = W_p^{T}$, $\forall p \in \{1,\cdots, 2P\}$. The induction base, $p=1$, can be established by analyzing $\mathcal{I}(B_{P+1}) = 0$ equation
\begin{equation*}
\mathcal{I}(B_{P+1}) = C_{P+1} - C_{P+1}^{T}
\end{equation*}
where $C_{P+1} = R_{1}^{T} Q_{P} + R_{2}^{T} Q_{P-1}+ \cdots + R_{P}^{T}Q_{1}$. With the results established in the previous part, $C_{P+1}$ can written as
\[ C_{P+1} = 
R^{T}
\begin{bmatrix}
    0 				& 	0 				& \cdots 	& 0 			& 0 			& 0 \\
    W_P			& 	W_{P-1}	& \cdots	& W_{2} 	& W_{1}	& 0 \\
    W_{P-1}	& 	W_{P-2}	& \cdots 	& W_{1}	& 0 			& 0 \\
    \vdots		& \vdots		& \vdots	& \vdots	& \vdots	& \vdots	 \\
    W_{1}		& 0 				& \cdots 	& 0			& 0 			& 0
\end{bmatrix}
R
\]
where $R^{T} = \begin{bmatrix} R_{0}^{T}, R_{1}^{T}, \cdots, R_{P}^{T} \end{bmatrix}$. Thus, $\mathcal{I}(B_{P+1})$ can be simplified to
\[
\mathcal{I}(B_{P+1}) = 
R^{T}
\begin{bmatrix}
    0 				& 	-W_P 	& \cdots 		& -W_{3}	& -W_2 	& -W_1 \\
    W_P			& 	0	 		& \cdots	 	& 0 			& 0			& 0 \\
    W_{P-1}	& 	0			& \cdots 		& 0 			& 0 			& 0 \\
    \vdots		& \vdots	& \vdots		& \vdots	& \vdots 	& \vdots	 \\
    W_{1}		& 0 			& \cdots 		& 0			& 0 			& 0
\end{bmatrix}
R
 \]
or equivalently
\begin{equation*}
\mathcal{I}(B_{P+1}) = (W_{P} R_{1} + \cdots + W_{1} R_{P})^{T} R_0 - R_0^{T}(W_{P} R_{1} + \cdots + W_{1} R_{P}) .
\end{equation*}
\cref{lem:main} guarantees that there exists a symmetric matrix $W_{P+1} \in \R^{d \times d}$ such that
\begin{equation*}
 W_{P+1} R_{0} + W_{P}  R_{1} + \cdots +  W_{1}  R_{P} = 0 .
\end{equation*}
Now, we assume that the induction hypothesis \eqref{eq:indHyp2} holds for all $p^{'} < P$, and the general solution to  $\mathcal{I}(B_{P+p}) = 0$ for $p=p^{'}+1$ follows the hypothesis form. With some elaboration, $\mathcal{I}(B_{p+P})$ can be simplified to 
\begin{equation*}
\mathcal{I}(B_{p+P})= C_{P+p} - C_{P+p}^{T}
\end{equation*}
where $C_{P+p} = \sum_{j=0}^{p-1}{(\sum_{i=p}^{P}{W_{P+p-i-j} R_{i}})^{T} R_{j} }$. The induction hypothesis \eqref{eq:indHyp2} simplifies $C_{P+p}$ to,
\begin{align*}
C_{P+p} &= (\sum_{i=p}^{P}{W_{P+p-i} R_{i}})^{T} R_{0}  + \sum_{j=1}^{p-1}{(\sum_{i=p}^{P}{W_{P+p-i-j} R_{i}})^{T} R_{j} } \\
&= (\sum_{i=p}^{P}{W_{P+p-i} R_{i}})^{T} R_{0}  - \sum_{j=1}^{p-1}{(\sum_{i=0}^{p-1}{W_{P+p-i-j} R_{i}})^{T} R_{j} }.
\end{align*}
Pairing of terms with similar sign leads to 
\begin{equation*}
\mathcal{I}(B_{P+p}) = D_{P+p}- D_{P+p}^{T}
\end{equation*}
where 
\begin{equation*}
D_{P+p}= (\sum_{i=1}^{P}{W_{P+p-i} R_{i}})^{T} R_{0}  + \sum_{j=1}^{p-1}{\sum_{i=1}^{p-1}{R_{j}^{T}W_{P+p-i-j} R_{i}} }
\end{equation*}
Since $\sum_{j=1}^{p-1}{\sum_{i=1}^{p-1}{R_{i}^{T}W_{P+p-i-j} R_{j}} } $ is a symmetric matrix, $\mathcal{I}(B_{P+p})$ can be ultimately simplified to
\begin{equation*}
\mathcal{I}(B_{P+p}) = (\sum_{i=1}^{P}{W_{P+p-i} R_{i}})^{T} R_{0}- ((\sum_{i=1}^{P}{W_{P+p-i} R_{i}})^{T} R_{0})^{T} .
\end{equation*}
Finally, $\mathcal{I}(B_{P+p}) = 0$ is equivalent to,
\begin{equation*}
\sum_{i=0}^{P}{W_{P+p-i} R_{i}} = 0
\end{equation*}
for some symmetric matrix $W_{P+p}$. This completes the proof of the induction and establishes the existence of the representation of $G(t) \in \mathcal{G}$ in terms of $(\{W_k\}, \{R_p\}) \in \mathcal{H}$. 

To show that this representation is unique, first note that by \cref{lem:unique-orth}, $G(t)$ has a unique spectral factorization $X(t) = \sum A_p t^p$ in $\mathcal{A}$, where $\mathcal{A} = \{ \sum_p (R_p + i Q_p)t^p \in \C^{d \times N}[t] \ : \ Q_0 = 0, R_0 \in \mathcal{X}_0 \}$. Further, every element of $\mathcal{A}$ corresponds to a unique element of $\mathcal{G}$. Hence, the spectral factorization map is a bijection between $\mathcal{A}$ and $\mathcal{G}$. 

Now, let us show that there is a bijection between $\mathcal{A}$ and $\mathcal{H}$. Suppose that $(\{W_p\}, \{R_p\})$ and $(\{W_p'\}, \{R_p'\})$ both generate the same $X(t) \in \mathcal{A}$. Since the real parts coincide we have $R_p = R_p'$ for all $p$. Then from \cref{eq:A_WX} we get
\[
\begin{bmatrix}
    0           &   0               & \cdots    & 0             & 0 \\
    W_{1}-W^{'}_{1}     & 0             & \cdots    & 0             & 0 \\
    W_{2}-W^{'}_{2}     & W_{1}-W^{'}_{1}       & \cdots    & 0             & 0 \\
    \vdots      & \vdots        & \ddots    & \vdots    & \vdots    \\
    W_{P}-W^{'}_{P}     &W_{P-1}-W^{'}_{P-1}    & \cdots    & W_{1}-W^{'}_{1}   & 0
\end{bmatrix}
\begin{bmatrix}
    R_{0} \\
    R_{1}  \\
    \vdots \\
    R_{P} 
\end{bmatrix}
= 0.
\]
Since $R_0$ has full rank row rank, it must be that $\{ W_k \}_{k=1}^{P} = \{ W^{'}_k \}_{k=1}^{P}$. Finally, using this fact in \cref{eq:WX}, we deduce that $\{ W_k \}_{k=p+1}^{2P} = \{ W^{'}_k \}_{k=P+1}^{2P}$ (injectivity). Since by the first part of the proof, every $X(t) \in \mathcal{A}_0$ can be represented via some $(\{ W_k \}, \{ R_p \}) \in \mathcal{H}$ (surjectivity), we also have a bijection between $\mathcal{A}$ and $\mathcal{H}$. By composition we have a bijection between $\mathcal{G}$ and $\mathcal{H}$.

\bibliographystyle{IEEEbib-init}
\bibliography{sf}

\end{document}